\theoremstyle{plain}
\newtheorem{theorem}{Theorem}
\newtheorem{lemma}[theorem]{Lemma}
\newtheorem{corollary}[theorem]{Corollary}
\theoremstyle{definition}
\newtheorem{definition}[theorem]{Definition}
\newtheorem{property}[theorem]{Property}
\theoremstyle{remark}
\long\def\ignore#1{\vskip 0pt}
\def\Oh{\mathcal{O}}
\newcommand{\xx}{\$}
\newcommand{\A}{\Sigma}
\newcommand{\lcp}{\mathsf{lcp}}
\newcommand{\lcpzo}{\mathsf{lcp}_{01}}
\newcommand{\lcpz}{\mathsf{lcp}_{0}}
\newcommand{\lcpo}{\mathsf{lcp}_{1}}
\newcommand{\sort}{\mathsf{sort}}
\newcommand{\sa}{\mathsf{sa}}
\newcommand{\sazo}{\mathsf{sa}_{01}}
\newcommand{\SA}{\mathsf{sa}}
\newcommand{\LCP}{\mathsf{LCP}}
\newcommand{\bwt}{\mathsf{bwt}}
\newcommand{\mbwt}{\mathsf{bwt}}
\newcommand{\bwtzo}{\mathsf{bwt}_{01}}
\newcommand{\onex}{\mathbf{1}}
\newcommand{\zerox}{\mathbf{0}}
\newcommand{\tx}{\mathsf{t}}
\def\stri#1{\mbox{\sf #1}}
\newcommand{\tz}{\mathsf{t}_0}
\newcommand{\tone}{\mathsf{t}_1}
\newcommand{\tzo}{\mathsf{t}_{01}}
\newcommand{\txx}[1]{\mathsf{t}_{#1}}
\newcommand{\sxx}{\mathsf{s}}
\newcommand{\saxx}{\mathsf{sa}}
\newcommand{\nz}{{n_0}}
\newcommand{\none}{{n_1}}
\newcommand{\eosz}{\$_0}
\newcommand{\eosone}{\$_1}
\newcommand{\eosx}[1]{\$_{#1}}
\newcommand{\bwtz}{\mathsf{bwt}_0}
\newcommand{\bwto}{\mathsf{bwt}_1}
\newcommand{\bwtx}[1]{\mathsf{bwt}_{#1}}
\newcommand{\oneb}{{\bf 1}}
\newcommand{\zerob}{{\bf 0}}
\newcommand{\Bid}{\mathsf{Block\_id}}
\newcommand{\bid}{\mathsf{id}}
\newcommand{\avelcp}{\mathsf{avelcp}_{01}}
\newcommand{\bv}[1]{Z^{(#1)}}
\newcommand{\kh}{{b(h)}}
\newcommand{\sbot}{0}
\newcommand{\hm}{{\sf H\&M}}
\newcommand{\gap}{{\sf Gap}}
\newcommand{\useless}{irrelevant}
\begin{document}

\title{\bf From H\&M to Gap\\ for Lightweight {BWT} Merging}

\ignore{\author{Giovanni Manzini}

\institute{
    Computer Science Institute\\
    University of Eastern Piedmont, Alessandria, Italy\\
    \email{giovanni.manzini@uniupo.it}\\[1ex]
}}

\author[1,2]{Giovanni Manzini}
\affil[1]{Computer Science Institute, University of Eastern Piedmont, Italy}
\affil[2]{IIT-CNR, Pisa, Italy}

\date{}

\maketitle \thispagestyle{empty}

\begin{abstract}

Recently, Holt and McMillan [Bionformatics 2014, ACM-BCB 2014] have proposed
a simple and elegant algorithm to merge the Burrows-Wheeler transforms of a
family of strings. In this paper we show that the \hm\ algorithm can be
improved so that, in addition to merging the BWTs, it can also merge the
Longest Common Prefix (LCP) arrays. The new algorithm, called \gap\ because
of how it operates, has the same asymptotic cost as the \hm\ algorithm and
requires additional space only for storing the LCP values.

\end{abstract}


\section{Introduction} \label{sec:intro}

Compressed indices~\cite{NM-survey07} are core components of many data
intensive tools, especially in bioinformatics~\cite{LiH10,Reinert2015}. A
fundamental component of many compressed indices is the {Burrows Wheeler
transform} (BWT) of the data to be indexed, which is often complemented by
the {Longest Common Prefix} (LCP) array and (a sampling of) the {Suffix
Array}. Because of the sheer size of the data involved, the construction of
compressed indices is a challenging problem in itself. Although the final
outcome is a {\em compressed} index, construction algorithms can be memory
intensive and the necessity of developing {\em lightweight}, ie space
economical, algorithms was recognized since the very beginning of the
field~\cite{BurKar03,lcp_swat,MF02}. An alternative to space economical
algorithms are external memory construction algorithms, where the challenge
is to efficiently use the abundant external memory, typically by accessing
data in large blocks (see \cite{icabd/KarkkainenK14,jea/KarkkainenK16} and
references therein).

Many construction algorithms for compressed indices are designed for the case
the input consists in a single sequence; yet in many applications the data to
be indexed consist in a collection of distinct items (documents, web pages,
chromosomes, proteins, etc.). One can concatenate such items using distinct
end-of-file separators and index the resulting single sequence. However, this
is possible only for small collections and from the algorithmic point of view
it makes no sense to ``forget'' that the input consists of different items:
this is an additional information that algorithms should exploit to run
faster.

A case of great practical interest is the one where the input is a collection
of Next Generation Sequencing (NGS) reads, which typically consists in
millions of sequences of lengths ranging from a few hundreds to a few
thousands symbols. In this case the use of explicit distinct separators is
not feasible: A few algorithms have been therefore developed specifically for
this problem. The first one is the {\sf BCR} algorithm~\cite{tcs/BauerCR13}.
For a collection of $m$ strings of total length $n$ and maximum length $K$,
{\sf BCR} uses $\Oh(m\log(mK))$ bits of space and takes $\Oh(K\sort(m))$ time
where  $\sort(m)$ is the time to sort $m$ integers. {\sf BCR} can be modified
into an external memory algorithm that uses a negligible amount of RAM and a
I/O volume of $\Oh(mK^2)$ bits. Recently, {\sf BCR} has been extended to
compute the LCP arrays along with the BWTs in $\Oh(K(n + \sort(m))$
time~\cite{jda/CoxGRS16}.

Two other algorithms designed for NGS reads are {\sf CX1}~\cite{LiuLL14} and
{\sf ropeBWT}~\cite{bioinformatics/Li14a}. The former is designed to exploit
the computing power of modern GPUs, while the latter uses a dynamic data
structure (a {\sf B+} tree) to maintain partial BWTs so that its complexity
is $\Oh(n\log n)$ time. A comparison of the performance of these algorithms
for different sets of NGS reads are reported
in~\cite[Table~1]{bioinformatics/Li14a}.

Recently, Holt and McMillan~\cite{bcb/HoltM14,bioinformatics/HoltM14} have
presented a new approach for computing the BWT of a collections of sequences
based on the concept of merging: First the BWTs of the individual sequences
are computed (by any single-string BWT algorithm) and then they are merged,
possibly in multiple rounds as in the standard mergesort algorithm. The idea
of BWT-merging is not new~\cite{FGM10,Siren09} but Holt and McMillan's
merging algorithm is different, and much simpler, that the previous
approaches. For a constant size alphabet the algorithm in~\cite{bcb/HoltM14}
merges the BWTs of two sequences $\txx{0}$, $\txx{1}$ in $\Oh(n\cdot\avelcp)$
time where $n=|\txx{0}| + |\txx{1}|$ and $\avelcp$ is the average length of
the common prefix between suffixes of $\txx{0}$ and $\txx{1}$. The average
longest common prefix is $\Oh(n)$ in the worst case but $\Oh(\log n)$ for
random strings and for many real word datasets~\cite{LMS12}. The algorithm is
lightweight in the sense that it uses only $\Oh(n)$ bits in addition to the
space for its input and output.

In this paper we show that the \hm\ (Holt and McMillan) merging algorithm can
be modified so that, in addition to the BWTs, it can merges the LCP arrays as
well. The new algorithm, called \gap\ because of how it operates, has the
same asymptotic cost as \hm\ and uses additional space only for storing its
additional input and output, ie the LCP values.

\ignore{Our algorithm is loosely based on a technique first introduced
in~\cite{BellerGOS13} to compute the LCP values from the BWT of a single
string. Here, this technique is applied to simultaneously compute BWT and LCP
values.}

\section{Notation}\label{sec:notation}

Let $\txx{}[1,n]$ denote a string of length $n$ over a finite alphabet $\A$.
As is usual in the indexing literature we assume $\txx{}[n]$ is a symbol not
appearing elsewhere in $\txx{}$ and lexicographically smaller than any other
symbol. We write $\txx{}[i,j]$ to denote the substring $\txx{}[i] \txx{}[i+1]
\cdots \txx{}[j]$. If $j\geq n$ we assume $\txx{}[i,j] = \txx{}[i,n]$. If
$i>j$ or $i > n$ then $\txx{}[i,j]$ is the empty string. Given two strings
$\txx{}$ and $\sxx$ we write $\txx{} \preceq \sxx$ ($\txx{} \prec \sxx$) to
denote that $\txx{}$ is lexicographically (strictly) smaller than $\sxx$. As
usual we assume that if $\txx{}$ is a prefix of $\sxx$ then $\txx{} \prec
\sxx$. We denote by  $\LCP(\txx{},\sxx{})$ the length of the longest common
prefix between $\txx{}$ and $\sxx$.

The {\em suffix array} $\saxx[1,n]$ associated to $\txx{}$ is the permutation
of $[1,n]$ giving the lexicographic order of $\txx{}$'s suffixes, that is,
for $i=1,\ldots,n-1$, $\txx{}[\saxx[i],n] \prec \txx{}[\saxx[i+1],n]$. The
{\em longest common prefix} array $\lcp[1,n+1]$ is defined for $i=2,\ldots,n$
by
\begin{equation}\label{eq:lcpdef}
\lcp[i]=\LCP(\txx{}[\saxx[i-1],n],\txx{}[\saxx[i],n]),
\end{equation}
that is, the $\lcp$ array stores the length of the longest common prefix
between lexicographically consecutive suffixes. In addition, for convenience
of notation, we define $\lcp[1]=\lcp[n+1] = -1$. The {\em Burrows-Wheeler
transform} $\bwt[1,n]$ of $\txx{}$ is defined by
$$
\bwt[i] = \begin{cases}
\tx[n] & \mbox{if } \saxx[i]=1\\
\tx[\saxx[i]-1] & \mbox{if } \saxx[i]>1.
\end{cases}
$$
In other words, $\bwt[1,n]$ is the permutation of $\txx{}$ in which the
position of $\txx{}[j]$ coincides with the lexicographic rank of
$\txx{}[j+1,n]$ (or of $\txx{}[1,n]$ if $j=n$) in the suffix array. In
accordance with the literature we call such string the {\em context} of
$\txx{}[j]$. See Figure~\ref{fig:BWTetc} for an example.

\begin{figure}[t]
\def\xz{$\bullet$}
\def\xy{\$}
\setlength{\tabcolsep}{7pt}
\begin{center}\sf
\begin{tabular}[t]{ccc}
\begin{tabular}[t]{|r|c|l|}\hline
lcp&bwt& {\em context}\\\hline
 -1 & b & \xy       \\
  0 & c & ab\xy     \\
  2 &\xy& abcab\xy  \\
  0 & a & b\xy      \\
  1 & a & bcab\xy   \\
  0 & b & cab\xy    \\
 -1 &   & \\\hline
\end{tabular}&
\begin{tabular}[t]{|r|c|l|}\hline
lcp&bwt& {\em context}\\\hline
 -1 & c & \xz       \\
  0 &\xz& aabcabc\xz\\
  1 & c & abc\xz    \\
  3 & a & abcabc\xz \\
  0 & a & bc\xz     \\
  2 & a & bcabc\xz  \\
  0 & b & c\xz      \\
  1 & b & cabc\xz   \\
 -1 &   & \\\hline
\end{tabular}&
\begin{tabular}[t]{|r|r|c|l|}\hline
id &$\lcp_{01}$&$\mbwt_{01}$& {\em context}\\\hline
 0 & -1 & b & \xy\\
 1 &  0 & c & \xz       \\
 1 &  0 &\xz& aabcabc\xz\\
 0 &  1 & c & ab\xy\\
 1 &  2 & c & abc\xz    \\
 0 &  3 &\xy& abcab\xy\\
 1 &  5 & a & abcabc\xz \\
 0 &  0 & a & b\xy\\
 1 &  1 & a & bc\xz     \\
 0 &  2 & a & bcab\xy\\
 1 &  4 & a & bcabc\xz  \\
 1 &  0 & b & c\xz      \\
 0 &  1 & b & cab\xy\\
 1 &  3 & b & cabc\xz   \\
   & -1 &   & \\\hline
\end{tabular}
\end{tabular}
\end{center}
\caption{LCP array and BWT for $\txx{0}=\stri{abcab\xx}$
and $\txx{1}=\stri{aabcabc\xz}$,
and multi-string BWT and corresponding LCP array for the
same strings. Column {\sf id} shows, for each entry of
$\mbwt_{01} = \stri{bc\xz cc\xy aaaabbb}$ whether it comes
from $\txx{0}$ or $\txx{1}$.}\label{fig:BWTetc}
\end{figure}


The longest common prefix (LCP) array, and Burrows-Wheeler transform (BWT)
are fundamental components of a wide class of compressed full text indices.
We are interested in the generalization of these data structures when more
than one string is involved. Let $\tz[1,\nz]$ and $\tone[1,\none]$ denote a
pair of strings such that $\tz[\nz] = \eosz$ and $\tone[\none] = \eosone$
where $\eosz < \eosone$ are two symbols not appearing elsewhere in $\tz$ and
$\tone$ and smaller than any other symbol. One can consider the concatenation
$\txx{01}[1,\nz+\none] = \txx{0}\txx{1}$ and define LCP array and BWT for it.
However, for algorithmic reasons it is more convenient to consider the
following slightly different BWT definition, first introduced
in~\cite{tcs/MantaciRRS07} and later used for example
in~\cite{tcs/BauerCR13,jda/CoxGRS16,cpm/LouzaTC13,bioinformatics/HoltM14}.
Let $\SA_{01}[1,\nz+\none]$ denote the suffix array of the concatenation
$\tz\tone$. The {\em multi-string} BWT of $\tz$ and $\tone$, denoted by
$\mbwt_{01}[1,\nz+\none]$, is defined by
$$
\mbwt_{01}[i] =
\begin{cases}
\tz[\nz]                      & \mbox{if } \SA_{01}[i] = 1\\
\tz[\SA_{01}[i] - 1]       & \mbox{if } 1 < \SA_{01}[i] \leq \nz\\
\tone[\none]                    & \mbox{if } \SA_{01}[i] = \nz + 1\\
\tone[\SA_{01}[i]-\nz - 1] & \mbox{if } \nz +1 < \SA_{01}[i].
\end{cases}
$$
In other words, $\mbwt_{01}[i]$ is the symbol preceding the $i$-th
lexicographically larger suffix, with the exception that if $\SA_{01}[i] = 1$
then $\mbwt_{01}[i] = \eosz$  and if $\SA_{01}[i] = \nz+1$ then
$\mbwt_{01}[i] = \eosone$. Hence, $\mbwt_{01}[i]$ always comes from the
string ($\tz$ or $\tone$) that prefixes the $i$-th largest suffix (see again
Fig.~\ref{fig:BWTetc}).

The above notion of multi-string BWT can be immediately generalized to define
$\mbwt_{01\cdots k}$ for a family of distinct strings $\txx{0},
\txx{1},\ldots, \txx{k}$. Essentially $\mbwt_{01\cdots k}$ is a permutation
of the symbols in $\txx{0}, \txx{1},\ldots,\txx{k}$ such that the position in
$\mbwt_{01\cdots k}$ of $\txx{i}[j]$ is given by the lexicographic rank of
its context $\txx{i}[j+1,n_i]$ (or $\txx{i}[1,n_i]$ if $j=n_i$). Note that
the multi-string BWT $\mbwt_{01\cdots k}$ defined above is related to
$\mbwt(\txx{0}\txx{1}\cdots \txx{k})$, ie the single-string BWT of the
concatenation $\txx{0}\txx{1}\cdots \txx{k}$. Indeed, since they are both
defined in terms of the suffix array of $\txx{0}\txx{1}\cdots \txx{k}$, from
$\mbwt_{01\cdots k}$ we get $\mbwt(\txx{0}\txx{1}\cdots \txx{k})$ replacing
$\eosx{0}$ with $\eosx{k}$, $\eosx{1}$ with $\eosx{0}$, $\eosx{2}$ with
$\eosx{1}$ and so on.

Given the concatenation $\tz\tone$ and its suffix array
$\SA_{01}[1,\nz+\none]$, we consider the corresponding LCP array
$\lcp_{01}[1,\nz+\none+1]$ defined as in~\eqref{eq:lcpdef} (see again
Fig.~\ref{fig:BWTetc}). Note that, for $i=2,\ldots,\nz+\none$, $\lcp_{01}[i]$
gives the length of the longest common prefix between the contexts of
$\mbwt_{01}[i]$ and $\mbwt_{01}[i-1]$. This definition can be immediately
generalized to a family of $k$ strings to define the LCP array
$\lcp_{01\cdots k}$ associated to the multi-string BWT $\mbwt_{01\cdots k}$.

\section{The \hm\ algorithm revisited}

In~\cite{bioinformatics/HoltM14} Holt and McMillan introduced a simple and
elegant algorithm, we call it the \hm\ algorithm, to merge multi-string BWTs
as defined above.

Given $\mbwt_{01\cdots k}$ and $\mbwt_{k+1\,k+2\,\cdots h}$ the algorithm
computes $\mbwt_{01\cdots h}$. The computation does not explicitly need the
strings $\txx{0}, \txx{1}, \ldots, \txx{h}$ but only the BWTs to be merged.
For simplicity of notation we describe the algorithm assuming we are merging
two single string BWTs $\bwtz = \bwt(\tz)$ and $\bwto=\bwt(\tone)$; the
algorithm does not change in the general case where the input are multi
string BWTs. Note also that the algorithm can be easily adapted to merge more
than two BWTs at the same time, that is to compute directly $\mbwt_{01\cdots
k}$ given $\bwtx{0}, \bwtx{1}, \ldots, \bwtx{k}$.

\ignore{ Let $\tz$ and $\tone$ denote two strings of length respectively
$\nz=|\tz|$ and $\none = |\tone|$. The \hm\ algorithm computes $\mbwt_{01}$
as defined above given the single string BWTs $\bwtz = \bwt(\tz)$ and
$\bwto=\bwt(\tone)$.

in such a way that the position of $\txx{0}[j]$ (resp. $\txx{0}[j]$) is given
by the lexicographic rank of $\txx{1}[j+1,n_1]$ (resp. $\txx{1}[j+1,n_1]$). }

Computing $\mbwt_{01}$ amounts to sorting the symbols of $\bwtz$ and $\bwto$
according to the lexicographic order of their contexts, where the context of
symbol $\bwtz[i]$ (resp. $\bwto[i]$) is $\tz[\sa_0[i],n_0]$ (resp.
$\tone[\sa_1[i],n_1]$). By construction, the symbols in $\bwtz$ and $\bwto$
are already sorted by context, hence to compute $\mbwt_{01}$ we only need to
merge $\bwtz$ and $\bwto$ without changing the relative order of the symbols
within the two sequences.

The \hm\ algorithm works in successive phases. After the $h$-th phase the
entries of $\bwtz$ and $\bwto$ are sorted on the basis of the first $h$
symbols of their context. More formally, the output of the $h$-th phase is a
binary vector $\bv{h}$ containing $n_0=|\tz|$ \zerob's and $n_1 = |\tone|$
\oneb's and such that the following property holds.

\begin{property}\label{prop:hblock}
For $i=1,\ldots, n_0$, $j=1,\ldots n_1$ the $i$-th \zerob\ precedes the
$j$-th \oneb\ in $\bv{h}$ iff
\begin{equation}\label{eq:hblock}
\tz[\sa_0[i], \sa_0[i] + h -1] \;\preceq\; \tone[\sa_1[j], \sa_1[i] + h -1]
\end{equation}
(recall that according to our notation if $\sa_0[i] + h -1>n_0$ then
$\tz[\sa_0[i], \sa_0[i] + h -1]$ coincides $\tz[\sa_0[i], n_0]$, and
similarly for $\tone$).\qed
\end{property}

Following Property~\ref{prop:hblock} we identify the $i$-th \zerob\ in
$\bv{h}$ with $\bwtz[i]$ and the $j$-th \oneb\ in $\bv{h}$ with $\bwto[j]$ so
that to $\bv{h}$ corresponds a permutation of $\mbwt_{01}$.
Property~\ref{prop:hblock} is equivalent to state that we can logically
partition $\bv{h}$ into $\kh+1$ blocks
\begin{equation}\label{eq:Zblocks}
\bv{h}[1,\ell_1],\; \bv{h}[\ell_1+1, \ell_2],\; \ldots,\;
\bv{h}[\ell_\kh+1,n_0+n_1]
\end{equation}
such that each block corresponds to a set of $\mbwt_{01}$ symbols whose
contexts are prefixed by the same length-$h$ string (the symbols with a
context of length less than $h$ are contained in singleton blocks). Within
each block the symbols of $\bwtz$ precede those of $\bwto$, and the context
of any symbol in block $\bv{h}[\ell_j+1, \ell_{j+1}]$ is lexicographically
smaller than the context of any symbol in block $\bv{h}[\ell_k+1,
\ell_{k+1}]$ with $k>j$.

The \hm\ algorithm initially sets $\bv{0} = \zerox^{\nz} \onex^{\none}$:
since the context of every $\mbwt_{01}$ symbol is prefixed by the same
length-0 string (the empty string), there is a single block containing all
$\bwt_{01}$ symbols. At phase $h$ the algorithm computes $\bv{h+1}$ from
$\bv{h}$ using the procedure in Figure~\ref{fig:HMalgo}. For completeness we
report the proof of the correctness of the \hm\ algorithm, which is a
restatement of Lemma~3.2 in~\cite{bioinformatics/HoltM14} using our notation.

\algnewcommand\KwTo{\textbf{to }} \algnewcommand\KwAnd{\textbf{and }}
\begin{figure}
\hrule\smallbreak
\begin{algorithmic}[1]
\State Initialize array $F[1,|\A|]$
\State $k_0 \gets 1$; $k_1 \gets 1$  \Comment{Init counters for $\bwtz$ and $\bwto$}
\For{$k \gets 1$ \KwTo $n_0+ n_1$}
    \State $b \gets \bv{h-1}[k]$\Comment{Read bit $b$ from $\bv{h-1}$}
    \If {$b = 0$} \Comment{Get symbol from $\bwtz$ or $\bwto$ according to $b$}
      \State $c \gets \bwtz[k_0{\mathsf ++}]$\label{line:c0}
    \Else
      \State $c \gets \bwto[k_1{\mathsf ++}]$\label{line:c1}
    \EndIf
    \State $j \gets F[c]{\mathsf ++}$ \Comment{Get destination for $b$ according to symbol $c$}
    \State $\bv{h}[j] \gets b$ \Comment{Copy bit $b$ to $\bv{h}$}
\EndFor
\end{algorithmic}
\smallbreak\hrule
\caption{Main loop of algorithm \hm\ for computing $\bv{h}$ given
$\bv{h-1}$. Array $F$ is initialized so that $F[c]$ contains
the number of occurrences
of symbols smaller than $c$ in $\bwtz$ and $\bwto$ plus one. Hence,
the bits stored in $\bv{h}$ immediately after reading symbol $c$
are stored in positions from $F[c]$ to $F[c+1]-1$ of $\bv{h}$.}\label{fig:HMalgo}
\end{figure}

\begin{lemma}
For $h=0,1,2,\ldots$ the bit vector $\bv{h}$ satisfies
Property~\ref{prop:hblock}.
\end{lemma}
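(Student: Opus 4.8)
The plan is to proceed by induction on $h$, proving the slightly stronger statement that, in addition to Property~\ref{prop:hblock}, the $i$-th \zerob\ of $\bv{h}$ is $\bwtz[i]$ and the $j$-th \oneb\ is $\bwto[j]$. This order-preservation is in fact implicit in the property itself, since the right-hand side of~\eqref{eq:hblock} refers to $\sa_0[i]$ and $\sa_1[j]$, i.e.\ it already presumes that the positional rank of a bit equals the suffix-array rank of the corresponding context. For the base case $h=0$ the strings compared in~\eqref{eq:hblock} are $\tz[\sa_0[i],\sa_0[i]-1]$ and $\tone[\sa_1[j],\sa_1[j]-1]$, both empty, so the condition $\preceq$ holds for every $i,j$; this matches $\bv{0}=\zerox^{\nz}\onex^{\none}$, in which every \zerob\ precedes every \oneb.

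For the inductive step I assume the claim for $\bv{h-1}$ and analyse one execution of the loop in Figure~\ref{fig:HMalgo}. Reading $\bv{h-1}$ left to right visits the symbols of $\bwtz$ and $\bwto$ in the order of the length-$(h-1)$ prefixes of their contexts, ties being broken, by the inductive hypothesis, according to the full suffix-array order. When the bit currently identified with $\bwtz[m]$ is read, the symbol fetched on line~\ref{line:c0} is $c=\bwtz[m]$, the character preceding the context $\tz[\sa_0[m],\nz]$ in $\tz$ (understood cyclically for the terminator $\eosz$). Prepending $c$ to that context yields the one-position-longer context $\tz[\sa_0[m]-1,\nz]$, whose length-$h$ prefix is precisely $c$ followed by the old length-$(h-1)$ prefix $\tz[\sa_0[m],\sa_0[m]+h-2]$. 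The same description applies verbatim to the \oneb's and $\tone$.

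Placing this bit at position $F[c]$ of $\bv{h}$ and then advancing $F[c]$ routes it into the contiguous region reserved for symbol $c$, and within that region bits are written in the exact order in which they are read. Consequently the bits of $\bv{h}$ are sorted first by $c$ (the first symbol of the extended context) and, within each bucket, by the old length-$(h-1)$ prefix; that is, they are sorted by the length-$h$ prefix of the extended context, with remaining ties inherited from the order in $\bv{h-1}$. Since the $\LF$ mapping sends the suffix-array rank $m$ of $\tz$ to the rank of $\tz[\sa_0[m]-1,\nz]$, the extended contexts of the \zerob's are exactly the suffixes of $\tz$, each occurring once; writing $i'$ for the new rank we have $\sa_0[i']=\sa_0[m]-1$, so the prefix driving the sort is $\tz[\sa_0[i'],\sa_0[i']+h-1]$, precisely the quantity appearing in~\eqref{eq:hblock} for parameter $h$.

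The step that needs care is verifying the strengthened hypothesis, namely that this sort leaves the \zerob's in $\sa_0$-order (and the \oneb's in $\sa_1$-order), so that the $i'$-th \zerob\ of $\bv{h}$ really is $\bwtz[i']$. This holds because the only ties in the length-$h$ prefix comparison occur between contexts sharing both $c$ and the length-$(h-1)$ prefix, and for these the algorithm preserves their relative order in $\bv{h-1}$, which by the inductive hypothesis is already the full suffix-array order; hence the positional rank in $\bv{h}$ coincides with the suffix-array rank. With the identification thus justified, the comparison between the $i$-th \zerob\ and the $j$-th \oneb\ reduces to comparing their length-$h$ prefixes $\tz[\sa_0[i],\sa_0[i]+h-1]$ and $\tone[\sa_1[j],\sa_1[j]+h-1]$, which is exactly Property~\ref{prop:hblock} for $\bv{h}$, completing the induction.
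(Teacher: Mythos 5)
Your proof is correct, and it runs on the same fuel as the paper's: induction on $h$, the observation that prepending the fetched BWT symbol $c$ turns the sorted length-$(h-1)$ prefix into the length-$h$ prefix of the extended context, and the stability of the counting-sort placement via $F[c]$. The organization is genuinely different, though. The paper fixes one \zerob/\oneb\ pair and verifies the two directions of the ``iff'' in Property~\ref{prop:hblock} separately, by a case analysis on whether $\tz[\sa_0[i]]=\tone[\sa_1[j]]$ together with the pointwise relation $\sa_0[i']=\sa_0[i]+1$; you instead establish a single global invariant---phase $h$ is a stable counting sort on the first character of the extended context, so $\bv{h}$ is ordered by length-$h$ prefixes with ties inherited from $\bv{h-1}$---plus the strengthened hypothesis that the \zerob's stay in $\sa_0$-order and the \oneb's in $\sa_1$-order, from which both directions drop out for all pairs at once. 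Your version buys a one-shot argument and makes explicit the identification (the $i$-th \zerob\ \emph{is} $\bwtz[i]$) that the paper only asserts informally after the lemma; the paper's pairwise version buys locality, needing no claim about the global shape of $\bv{h}$. Two spots in your write-up need tightening, both patchable in a line. First, the claim that the length-$h$ prefix of the extended context is ``$c$ followed by the old length-$(h-1)$ prefix'' fails for the two bits whose fetched symbol is $\eosz$ or $\eosone$: there the genuine suffix compared in~\eqref{eq:hblock} is shorter than the cyclic rotation you invoke. This is harmless only because those symbols are unique and smallest, so their buckets are singletons at the extremes; the paper neutralizes exactly this case with its preliminary observation that $\sa_0[i]\neq\nz$ and $\sa_1[j]\neq\none$ when the first symbols coincide, and you should say something equivalent. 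Second, ``ties broken according to the full suffix-array order'' is false for cross-string ties: when a \zerob\ and a \oneb\ have equal length-$(h-1)$ prefixes, $\bv{h-1}$ puts the \zerob\ first by the $\preceq$ convention, which need not match the final $\SA_{01}$ order (such inversions are fixed only in later phases). Your later, operative phrasing---ties inherited from $\bv{h-1}$---is the correct one, and it is all the argument actually uses.
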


\begin{proof}
We prove the result by induction. For $h=0$, $\delta=0,1$
$\txx{\delta}[\sa_\delta[i],\sa_\delta[i]-1]$ is the empty string
so~\eqref{eq:hblock} is always true and Property~\ref{prop:hblock} is
satisfied by $\bv{0} = \zerox^\nz \onex^\none$.

To prove the ``if'' part, let $h>0$ and let $1 \leq v < w \leq \nz+\none$
denote two indexes such that $\bv{h}[v]$ is the $i$-th \zerob\ and
$\bv{h}[w]$ is the $j$-th \oneb\ in $\bv{h}$. We need to show that under
these assumptions inequality~\eqref{eq:hblock} on the lexicographic order
holds.

Assume first $\tz[\sa_0[i]]\neq \tone[\sa_1[j]]$. The hypothesis $v<w$
implies $\tz[\sa_0[i]]< \tone[\sa_1[j]]$ hence~\eqref{eq:hblock} certainly
holds.

Assume now $\tz[\sa_0[i]] = \tone[\sa_1[j]]$. We preliminary observe that it
must be $\sa_0[i]\neq\nz$ and $\sa_1[i]\neq\none$: otherwise we would have
$\tz[\sa_0[i]]=\xx_0$ or $\tone[\sa_1[j]]=\xx_1$ which is impossible since
these symbols appear only once in $\tz$ and $\tone$.

Let $v'$, $w'$ denote respectively the value of the main loop variable~$k$ in
the procedure of Figure~\ref{fig:HMalgo} when the entries $\bv{h}[v]$ and
$\bv{h}[w]$ are written (hence, during the scanning of $\bv{h-1}$). The
hypothesis $v<w$ implies ${v}' < {w}'$. By construction
$\bv{h-1}[{v}']=\zerox$ and $\bv{h-1}[{w}']=\onex$. Say ${v}'$ is the $i'$-th
\zerob\ in $\bv{h-1}$ and ${w}'$ is the $j'$-th \oneb\ in $\bv{h-1}$. By the
inductive hypothesis on $\bv{h-1}$ we have
\begin{equation}\label{eq:hblock2}
\tz[\sa_0[i'], \sa_0[i'] + h -2] \;\preceq\; \tone[\sa_1[j'], \sa_1[j'] + h -2],
\end{equation}
The fundamental observation is that, being $\sa_0[i]\neq\nz$ and
$\sa_1[i]\neq\none$, it is
$$
\sa_0[i'] = \sa_0[i] + 1\qquad\mbox{and}\qquad
\sa_1[j'] = \sa_1[j] + 1.
$$
Since
\begin{align}
\tz[\sa_0[i],\sa_0[i]+h-1]\; &= \;\tz[\sa_0[i]]\tz[\;\sa_0[i'],\sa_0[i']+h-2]\\
\tone[\sa_1[j],\sa_1[j]+h-1]\; &=\; \tone[\sa_1[j]]\;\tone[\sa_1[j'],\sa_1[j']+h-2]
\end{align}
combining $\tz[\sa_0[i]] = \tone[\sa_1[j]]$ with~\eqref{eq:hblock2} gives
us~\eqref{eq:hblock}.

For the ``only if'' part assume~\eqref{eq:hblock} holds. We need to prove
that in $\bv{h}$ the $i$-th \zerob\ precedes the $j$-th \oneb. If
$\tz[\sa_0[i]] < \tone[\sa_1[j]]$ the proof is immediate. If $\tz[\sa_0[i]] =
\tone[\sa_1[j]]$, we must have
$$
\tz[\sa_0[i]+1,\sa_0[i]+h-1] \preceq
\tone[\sa_1[j]+1,\sa_1[j]+h-1].
$$
By induction, if $\sa_0[i'] = \sa_0[i] + 1$ and $\sa_1[j']= \sa_1[j]+1$ in
$\bv{h-1}$ the $i'$-th \zerob\ precedes the $j'$-th \oneb. During phase~$h$,
the $i$-th \zerob\ in $\bv{h}$ is written when processing the $i'$-th \zerob\
of $\bv{h-1}$, and the $j$-th \oneb\ in $\bv{h}$ is written when processing
the $j'$-th \oneb\ of $\bv{h-1}$. Since in $\bv{h-1}$ the $i'$-th \zerob\
precedes the $j'$-th \oneb\ and
$$
\bwtz[i']=\tz[\sa_0[i]] = \tone[\sa_1[j]] = \bwto[j']
$$
in $\bv{h}$ their relative order does not change and the $i$-th \zerob\
precedes the $j$-th \oneb\ as claimed.
\end{proof}

\begin{figure}
\hrule\smallbreak
\begin{algorithmic}[1]
\State Initialize arrays $F[1,|\A|]$ and $\Bid[1,|\A|]$\label{line:init}
\State $k_0 \gets 1$; $k_1 \gets 1$  \Comment{Init counters for $\bwtz$ and $\bwto$}
\For{$k \gets 1$ \KwTo $n_0+ n_1$}
    \If{$B[k]\neq \sbot $ \KwAnd $B[k]\neq h$}\label{line:B=0h}
      \State $\bid\gets k$\Comment{A new block of $\bv{h-1}$ is starting}\label{line:blockstart}
    \EndIf
    \State $b \gets \bv{h-1}[k]$\Comment{Read bit $b$ from $\bv{h-1}$}\label{line:block_process_start}
    \If {$b = 0$} \Comment{Get symbol from $\bwtz$ or $\bwto$ according to $b$}
      \State $c \gets \bwtz[k_0{\mathsf ++}]$\label{line:c0}
    \Else
      \State $c \gets \bwto[k_1{\mathsf ++}]$\label{line:c1}
    \EndIf
    \State $j \gets F[c]{\mathsf ++}$ \Comment{Get destination for $b$ according to symbol $c$}
    \State $\bv{h}[j] \gets b$ \Comment{Copy bit $b$ to $\bv{h}$}
    \If{$\Bid[c]\neq \bid$}
      \State $\Bid[c]\gets \bid$\Comment{Update block id for symbol $c$}
      \If{$B[j] = \sbot$} 
        \State$B[j] = h$\Comment{A new block of $\bv{h}$ will start here}\label{line:writeh}
      \EndIf
    \EndIf \label{line:block_process_end}
\EndFor
\end{algorithmic}
\smallbreak\hrule
\caption{Main loop of the \hm\ algorithm modified for the computation of
the $\lcp$ values. At line~\ref{line:init}
for each symbol $c$ we set $\Bid[c] = -1$ and $F[c]$ as in
Figure~\ref{fig:HMalgo}. At the beginning of the algorithm we initialize the
array $B[0,\nz+\none]$ as $B = 1\:0^{\nz+\none-1}\:1$.}\label{fig:HMlcp}
\end{figure}

We now show that with a simple modification to the \hm\ algorithm it is
possible to compute, in addition to $\bwtzo$, also the LCP array $\lcpzo$
defined in Section~\ref{sec:notation}. Our strategy for computing LCP values
consists in keeping explicit track of the logical blocks we have defined for
$\bv{h}$ and represented in~\eqref{eq:Zblocks}. More precisely, we maintain
an integer array $B[1,n_0+n_1+1]$ such that at the end of phase $h$ it is
$B[i]\neq 0$ iff a block of $\bv{h}$ starts at position~$i$. The use of such
integer array is shown in Figure~\ref{fig:HMlcp}. Note that: $(i)$ initially
we set $B = 1\:0^{\nz+\none-1}\:1$ and once an entry in $B$ becomes nonzero
it is never changed,  $(ii)$ during phase $h$ we only write to $B$ the value
$h$, $(iii)$ in the test at Line~\ref{line:B=0h} the value $h$ is equivalent
to 0, hence the values written during phase $h$ influence the algorithm only
in subsequent phases. The following lemma shows that the nonzero values of
$B$ at the end of phase~$h$ mark the boundaries of $\bv{h}$'s logical blocks.

\begin{lemma} \label{lemma:B}

For any $h\geq 0$, let $\ell$, $m$ be such that $1 \leq \ell \leq m \leq
\nz+\none$ and
\begin{equation}\label{eq:lcpblock}
\lcpzo[\ell] < h,\quad \min(\lcpzo[\ell+1], \ldots, \lcpzo[m]) \geq h,
\quad \lcpzo[m+1] < h.
\end{equation}
Then, at the end of phase $h$ the array $B$ is such that
\begin{equation}\label{eq:Bblock}
B[\ell]\neq 0, \quad B[\ell+1] = \cdots = B[m] = 0,
\quad B[m+1] \neq 0
\end{equation}
and $\bv{h}[\ell,m]$ is one of the blocks in~\eqref{eq:Zblocks}.
\end{lemma}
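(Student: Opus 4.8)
The plan is to prove, by induction on $h$, the slightly stronger statement that at the end of phase $h$ one has $B[i]\neq 0$ if and only if position $i$ is the start of one of the blocks of $\bv{h}$ listed in~\eqref{eq:Zblocks}; the three assertions in~\eqref{eq:Bblock} then follow at once from~\eqref{eq:lcpblock}. Before starting the induction I would record the (standard) fact that~\eqref{eq:lcpblock} is just the $\lcpzo$-description of these blocks: since for $p<q$ the contexts of positions $p$ and $q$ share a common prefix of length $\geq h$ exactly when $\min(\lcpzo[p+1],\ldots,\lcpzo[q])\geq h$, an interval $[\ell,m]$ satisfies~\eqref{eq:lcpblock} iff it is a maximal run of positions whose contexts are prefixed by a common length-$h$ string, i.e.\ iff $\bv{h}[\ell,m]$ is a block of~\eqref{eq:Zblocks} (using Property~\ref{prop:hblock} and the sentinels $\lcpzo[1]=\lcpzo[\nz+\none+1]=-1$). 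This simultaneously settles the last claim of the lemma and lets me pass freely between the two descriptions of a block. For the base case $h=0$ the only interval allowed by~\eqref{eq:lcpblock} is $[1,\nz+\none]$, and the initial value $B=1\,0^{\nz+\none-1}\,1$ marks exactly its two endpoints, matching the single block $\bv{0}=\zerox^{\nz}\onex^{\none}$.

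For the inductive step I would assume the statement at the end of phase $h-1$ and follow one scan of the loop in Figure~\ref{fig:HMlcp}. First I would check that $\bid$ correctly tracks the current block of $\bv{h-1}$: by the inductive hypothesis the nonzero entries of $B$ at the start of the phase are precisely the level-$(h-1)$ block starts, and they carry values smaller than $h$ (the case $h=1$, where $\bv{0}$ is a single block and $\bid$ simply keeps its initial value, being immediate), so the test at Line~\ref{line:B=0h} (which, by remark~$(iii)$, treats the values $h$ written during the current phase as $0$) fires exactly at those positions. Since level-$h$ blocks refine level-$(h-1)$ blocks, each such position is also a level-$h$ block start and its mark is never erased, so no old mark becomes spurious.

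The heart of the argument is the correspondence between the blocks of $\bv{h-1}$ and those of $\bv{h}$ induced by the main loop. Recomputing exactly as in the proof of the previous lemma, when a source entry whose context is prefixed by $P$ and whose BWT symbol is $c$ is read, the entry written to $\bv{h}[F[c]]$ has context prefixed by $cP$; conversely every entry of $\bv{h}$ with context prefixed by $cP$ arises in this way. Hence, for a fixed block $\mathcal{B}_P$ of $\bv{h-1}$ with prefix $P$ and a fixed symbol $c$, the source entries of $\mathcal{B}_P$ carrying symbol $c$ are written to a \emph{contiguous} range of $\bv{h}$ that is exactly the level-$h$ block with prefix $cP$: contiguity follows because $F[c]$ increases by one per occurrence and the scan visits blocks in order, so these occurrences form a consecutive group among all occurrences of $c$; the range starts at the destination of the first $c$-entry of $\mathcal{B}_P$. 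Every level-$h$ block arises from a unique such pair $(\mathcal{B}_P,c)$, the end-marker symbols giving the singleton blocks of short contexts. I would then read off the write rule: $B[j]=h$ is set at Line~\ref{line:writeh} exactly when $\Bid[c]\neq\bid$, i.e.\ on the first $c$-entry of the current block, and only when $B[j]=\sbot$. By the correspondence this $j$ is always the start of a level-$h$ block, and every level-$h$ block start is reached by some first-$c$-entry; combined with the tracking of $\bid$ from the previous step this shows that at the end of phase $h$ the marked positions are precisely the level-$h$ block starts, while every position interior to a level-$h$ block (being interior also to a level-$(h-1)$ block) is neither marked before the phase nor written during it, hence stays $0$.

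I expect the contiguity-and-bijection claim of the third paragraph to be the main obstacle: it re-derives, at the level of individual occurrences of each symbol, the way the \hm\ recursion refines the block partition, and one has to argue carefully both that the $c$-entries of a single block land in one uninterrupted run (using monotonicity of $F[c]$ together with the block-ordered scan) and that distinct pairs $(\mathcal{B}_P,c)$ exhaust the level-$h$ blocks, with the end-of-string symbols handled as the degenerate singleton case.
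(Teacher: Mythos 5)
Your proof is correct, and it rests on the same inductive skeleton and loop mechanics as the paper's, but it organizes the induction in a genuinely different way. The paper argues locally: it fixes one target block $\bv{h}[\ell,m]$, writes its length-$h$ prefix as $s=cs'$, pulls back to the level-$(h-1)$ range $\sazo[e,f]$ of suffixes prefixed by $s'$, applies the inductive hypothesis only to $[e,f]$, and traces the scan there ($\bid$ changes at $e$, the first occurrence of $c$ marks $B[\ell]$, later occurrences inside $[e,f]$ mark nothing); crucially, $B[m+1]\neq 0$ then needs a separate case analysis (the next occurrence of $c$ after $f$, and, if $c$ never reoccurs, a mark left over from phase~1 or from the initialization of $B$). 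You instead prove the global invariant that the nonzero entries of $B$ are \emph{exactly} the level-$h$ block starts --- which, as you correctly observe, is equivalent to the paper's statement because \eqref{eq:lcpblock} characterizes precisely the blocks of \eqref{eq:Zblocks} --- via a bijection between pairs (level-$(h-1)$ block, symbol $c$) and level-$h$ blocks. This buys you two things: the assertion $B[m+1]\neq 0$ comes for free, since $m+1$ is itself a block start (or the sentinel position), so the paper's edge cases about vanishing occurrences of $c$ disappear; and you explicitly patch a point the paper glosses over, namely that in phase $h=1$ the test at Line~\ref{line:B=0h} does not fire at $k=1$ (because $B[1]=1=h$), so $\bid$ keeps its initial value and correctness there rests on the initialization $\Bid[c]=-1$. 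The price is that your contiguity-and-bijection step carries the weight the paper distributes between Property~\ref{prop:hblock} and the local inductive hypothesis on $[e,f]$: one must check that the $c$-entries of a single source block form a consecutive run among all occurrences of $c$ (monotonicity of $F[c]$ plus the block-ordered scan), that the resulting destination interval coincides with the level-$h$ block with prefix $cP$, and that the end-marker symbols and short contexts fit in as the degenerate singleton cases --- you identify this correctly as the crux, and your sketch of it is sound.
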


\begin{proof}
We prove the result by induction on $h$. For $h=0$, hence before the
execution of the first phase, \eqref{eq:lcpblock} is only valid for $\ell=1$
and $m=\nz+\none$ (recall we defined $\lcpzo[1]=\lcpzo[\nz+\none+1]=-1$).
Since initially $B=1\: 0^{\nz+\none-1}\:1$ our claim holds.

Suppose now that~\eqref{eq:lcpblock} holds for some $h>0$. Let
$s=\tzo[\sazo[\ell],\sazo[\ell] + h-1]$; by~\eqref{eq:lcpblock} $s$ is a
common prefix of the suffixes starting at positions $\sazo[\ell]$,
$\sazo[\ell+1]$, \ldots, $\sazo[m]$, and no other suffix of $\tzo$ is
prefixed by~$s$. By Property~\ref{prop:hblock} the \zerob's and \oneb's in
$\bv{h}[\ell,m]$ corresponds to the same set of suffixes That is, if $\ell
\leq v \leq m$ and $\bv{h}[v]$ is the $i$th \zerob\ (resp. $j$th \oneb) of
$\bv{h}$ then the suffix starting at $\tz[\sa_0[i]]$ (resp. $\to[\sa_1[j]]$)
is prefixed by $s$.

To prove~\eqref{eq:Bblock} we start by showing that, if $\ell < m$, then at
the end of phase $h-1$ it is $B[\ell+1] = \cdots = B[m] = 0$. To see this
observe that the range $\sazo[\ell,m]$ is part of a (possibly) larger range
$\sazo[\ell',m']$ containing all suffixes prefixed by the length $h-1$ prefix
of $s$. By inductive hypothesis, at the end of phase $h-1$ it is $B[\ell'+1]
= \cdots = B[m'] = 0$ which proves our claim since $\ell'\leq \ell$ and $m
\leq m'$.

To complete the proof, we need to show that during phase $h$: $(i)$ we do not
write a nonzero value in $B[\ell+1,m]$ and $(ii)$ we write a nonzero to
$B[\ell]$ and $B[m+1]$ if they do not already contain a nonzero. Let $c=s[0]$
and $s'=s[1,h-1]$ so that $s = cs'$. Consider now the range $\sazo[e,f]$
containing the suffixes prefixed by $s'$. By inductive hypothesis at the end
of phase $h-1$ it is
\begin{equation}\label{eq:Bblock_inproof}
B[e]\neq 0, \quad B[e+1] = \cdots = B[f] = 0,
\quad B[f+1] \neq 0.
\end{equation}
During iteration $h$, the bits in $\bv{h}[\ell,m]$ are possibly changed only
when we are scanning the region $\bv{h-1}[e,f]$ and we find an entry
$b=\bv{h-1}[k]$, $e\leq k \leq f$, such that the corresponding value in
$\bwtx{b}$ is $c$. Note that by~\eqref{eq:Bblock_inproof} as soon as $k$
reaches $e$ the variable $\bid$ changes and becomes different from all values
stored in $\Bid$. Hence, at the first occurrence of symbol $c$ the value $h$
will be stored in $B[\ell]$ (Line~\ref{line:writeh}) unless a nonzero is
already there. Again, because of~\eqref{eq:Bblock_inproof}, during the
scanning of $\bv{h-1}[e,f]$ the variable $\bid$ does not change so subsequent
occurrences of $c$ will not cause a nonzero value to be written to
$B[\ell+1,m]$. Finally, as soon as we leave region $\bv{h-1}[e,f]$ and $k$
reaches $f+1$, the variable $\bid$ changes again and at the next occurrence
of $c$ a nonzero value will be stored in $B[m+1]$. If there are no more
occurrences of $c$ after we leave region $\bv{h-1}[e,f]$ then either
$\sazo[m+1]$ is the first suffix array entry prefixed by symbol $c+1$ or
$m+1=\nz+\none+1$. In the former case $B[m+1]$ gets a nonzero value at phase
1, in the latter case $B[m+1]$ gets a nonzero when we initialize array $B$.

This completes the proof
\end{proof}

\begin{corollary}\label{cor:lcp}

For $i=2,\ldots,\nz+\none$, if $\lcpzo[i] = \ell$, then starting from the end
of phase $\ell+1$ it is $B[i]=\ell+1$.

\end{corollary}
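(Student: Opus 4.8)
The plan is to read off the value of $B[i]$ from the two bookkeeping facts about the $B$ array recorded just before Lemma~\ref{lemma:B}: once an entry of $B$ becomes nonzero it is never overwritten, and during phase~$h$ the only value ever written into $B$ is $h$ itself. Taken together these imply that, whenever $B[i]$ is nonzero at the end of a phase, its value equals the number of the phase in which it first switched from $0$ to nonzero. Hence it suffices to pin down that phase exactly and show it is $\ell+1$, i.e.\ to prove that $B[i]=0$ at the end of phase~$\ell$ but $B[i]\neq 0$ at the end of phase~$\ell+1$.

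First I would show $B[i]=0$ at the end of phase~$\ell$ by applying Lemma~\ref{lemma:B} at depth $h=\ell$. Since $\lcpzo[i]=\ell\geq\ell=h$, position $i$ is not the left endpoint of the depth-$\ell$ block that contains it; that is, $i$ lies strictly inside some block $\bv{\ell}[\ell',m']$ given by~\eqref{eq:Zblocks}, with $\ell'<i\leq m'$. By the interior part of~\eqref{eq:Bblock} every such interior entry is zero at the end of phase~$\ell$, so $B[i]=0$. (For the base case $\ell=0$ the statement of Lemma~\ref{lemma:B} at $h=0$ degenerates to the single block $\bv{0}[1,\nz+\none]$, and the claim is just the initialization $B=1\,0^{\nz+\none-1}\,1$, which already gives $B[i]=0$ for $2\leq i\leq \nz+\none$.)

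Next I would show $B[i]\neq 0$ at the end of phase~$\ell+1$ by applying Lemma~\ref{lemma:B} at depth $h=\ell+1$. Now $\lcpzo[i]=\ell<\ell+1=h$, so position $i$ satisfies $\lcpzo[i]<h$ and is therefore the left endpoint of one of the blocks in~\eqref{eq:Zblocks} at this depth. By the first part of~\eqref{eq:Bblock} that left endpoint carries a nonzero $B$ value, whence $B[i]\neq 0$ at the end of phase~$\ell+1$.

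Combining the two observations, $B[i]$ is zero at the end of phase~$\ell$ and nonzero at the end of phase~$\ell+1$, so it first becomes nonzero precisely during phase~$\ell+1$; since the only value written during that phase is $\ell+1$, we conclude $B[i]=\ell+1$ at the end of phase~$\ell+1$, and by the write-once property this value is retained in every subsequent phase, which is exactly the claim. I do not expect a genuine obstacle here, as the whole argument is a routine consequence of Lemma~\ref{lemma:B}; the only point requiring minor care is the boundary value $\ell=0$, where Lemma~\ref{lemma:B} at depth $0$ collapses and one must fall back on the initial contents of $B$ rather than on a nontrivial inductive block.
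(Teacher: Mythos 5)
Your proof is correct and follows essentially the same route as the paper: both argue via Lemma~\ref{lemma:B} that $B[i]$ is still zero at the end of phase $\ell$ and nonzero at the end of phase $\ell+1$, then invoke the facts that phase $h$ only ever writes the value $h$ and that nonzero entries of $B$ are never overwritten. You merely spell out more explicitly the two applications of the lemma (at depths $h=\ell$ and $h=\ell+1$) and the $\ell=0$ boundary case, which the paper's two-sentence proof leaves implicit.
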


\begin{proof}
By Lemma~\ref{lemma:B} we know that $B[i]$ becomes nonzero only after phase
$\ell+1$. Since at the end of phase $\ell$ it is still $B[i]=0$ during phase
$\ell+1$ $B[i]$ gets the value $\ell+1$ which is never changed in successive
phases.
\end{proof}

The above corollary suggests the following algorithm to compute $\bwtzo$ and
$\lcpzo$: repeat the procedure of Figure~\ref{fig:HMlcp} until the phase $h$
in which all entries in $B$ become nonzero. At that point $\bv{h}$ describes
how $\bwtz$ and $\bwto$ should be merged to get $\bwtzo$ and for $i=2,\ldots,
\nz+\none$ $\lcpzo[i] = B[i] - 1$. The above strategy requires a number of
iterations, each one taking $\Oh(\nz+\none)$ time, equal to the maximum of
the $\lcp$ values for an overall complexity of
$\Oh((\nz+\none)\mathsf{maxlcp}_{01})$, where $\mathsf{maxlcp}_{01}=\max_i
\lcpzo[i]$. In the next section we describe a much faster algorithm that goes
beyond this simple strategy and avoids to re-process the portions of $B$ and
$\bv{h}$ which are no longer relevant for the computation of the final
result.

\section{The \gap\ algorithm}

\begin{definition}
If $B[\ell]\neq 0$, $B[m+1]\neq 0$ and $B[\ell+1] = \cdots = B[m] = 0$, we
say that block $\bv{h}[\ell,m]$ is {\em monochrome} if it contains only
\zerob's or only \oneb's.\qed
\end{definition}

Since a monochrome block only contains suffixes from either $\tz$ or $\tone$,
whose relative order and LCP's are known, it does not need to be further
modified. This intuition is formalized by the following lemmas.

\begin{lemma}\label{lemma:monochrome}
If at the end of phase $h$ bit vector $\bv{h}$ contains only monochrome
blocks we can compute $\bwtzo$ and $\lcpzo$ in $\Oh(\nz + \none)$ time.
\end{lemma}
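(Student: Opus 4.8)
The plan is to observe that, once every block of $\bv{h}$ is monochrome, the bit vector $\bv{h}$ already encodes the \emph{final} merge order of $\bwtz$ and $\bwto$, so that a single left-to-right scan of $\bv{h}$ together with the block array $B$ produces both outputs. Besides the two input BWTs, the algorithm also has the individual LCP arrays $\lcpz$ and $\lcpo$ available. I would keep two counters $k_0,k_1$, initialized to $1$, indexing $\bwtz,\bwto$ (equivalently $\lcpz,\lcpo$); for $i=1,\dots,\nz+\none$ I read the bit $\bv{h}[i]$, write the next symbol of $\bwtz$ or $\bwto$ accordingly into $\bwtzo[i]$, set $\lcpzo[i]$ as explained below, and advance the relevant counter. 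The two conventional boundary values $\lcpzo[1]=\lcpzo[\nz+\none+1]=-1$ are written directly.

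First I would justify that $\bv{h}$ records the complete ordering. A monochrome block $\bv{h}[\ell,m]$ made of \zerob's collects, by Property~\ref{prop:hblock}, exactly the suffixes of $\tz$ prefixed by the common length-$h$ string $s=\tzo[\sazo[\ell],\sazo[\ell]+h-1]$, and no suffix of $\tone$. Being all and only the $\tz$-suffixes with prefix $s$, they occupy a \emph{contiguous} range $\sa_0[a,b]$ of the suffix array of $\tz$; and since Property~\ref{prop:hblock} identifies the $i$-th \zerob\ of $\bv{h}$ with $\sa_0[i]$, the block consists precisely of the $a$-th through $b$-th \zerob's, listed in $\sa_0$ order. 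Distinct blocks are already separated by their distinct length-$h$ prefixes, so when all blocks are monochrome the order recorded by $\bv{h}$ coincides with the full lexicographic order of the contexts. Hence emitting, during the scan, $\bwtz[k_0]$ at each \zerob\ and $\bwto[k_1]$ at each \oneb\ reconstructs $\bwtzo$.

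For the LCP array I would split positions according to $B$. If $B[i]\neq 0$ then $i$ is a block boundary and Corollary~\ref{cor:lcp} gives $\lcpzo[i]=B[i]-1$. If $B[i]=0$ then $i$ is interior to a monochrome block; taking the \zerob\ case, position $i$ is the $k_0$-th \zerob\ and so corresponds to $\sa_0[k_0]$, while its predecessor in the merged order, position $i-1$, lies in the same block and is therefore the $(k_0-1)$-th \zerob, i.e.\ $\sa_0[k_0-1]$. Since the longest common prefix of two suffixes depends only on the suffixes themselves, $\lcpzo[i]=\LCP(\tz[\sa_0[k_0-1],\nz],\tz[\sa_0[k_0],\nz])=\lcpz[k_0]$, and symmetrically $\lcpzo[i]=\lcpo[k_1]$ in the \oneb\ case. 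Thus every interior LCP value is copied in constant time from the appropriate individual LCP array, while the boundary values come from $B$.

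The step requiring the most care is this interior case, where I must identify the merged predecessor of position $i$ with the individual-array predecessor $\sa_0[k_0-1]$ rather than with some suffix of $\tone$: this is exactly where monochromaticity is essential, as it rules out an intervening \oneb\ whose LCP with $\sa_0[k_0]$ is only known to be $\geq h$ and could not be read off $\lcpz$ or $\lcpo$. Granting this, correctness of $\bwtzo$ follows from the argument above and that of $\lcpzo$ from the case split, while the running time is that of a single pass performing $\Oh(1)$ work per position plus the two boundary assignments, namely $\Oh(\nz+\none)$.
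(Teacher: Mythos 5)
Your proof is correct and takes essentially the same approach as the paper: Property~\ref{prop:hblock} together with monochromaticity shows the merge order encoded by $\bv{h}$ is final, and $\lcpzo$ is obtained from $B$ at block boundaries (via Corollary~\ref{cor:lcp}) and copied from $\lcpz$ or $\lcpo$ at interior positions. Your explicit single-scan implementation and the remark on why monochromaticity is essential for the interior LCP case simply spell out details the paper leaves implicit.
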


\begin{proof}

By Property~\ref{prop:hblock}, if we identify the $i$-th \zerob\ in $\bv{h}$
with $\bwtz[i]$ and the $j$-th \oneb\ with $\bwto[j]$ the only elements which
could be out of order (ie not correctly sorted by context) are those within
the same block. However, if the blocks are monochrome all elements belongs to
either $\bwtz$ or $\bwto$ so their relative order is correct.

To compute $\lcpzo$ we observe that if $B[i]\neq 0$ then by (the proof of)
Corollary~\ref{cor:lcp} it is $\lcpzo[i] = B[i] -1$. If instead $B[i]=0$ we
are inside a block hence $\sazo[i-1]$ and $\sazo[i-1]$ belongs to the same
string $\tz$ or $\tone$ and their $\lcp$ is directly available in $\lcpz$ or
$\lcpo$.
\end{proof}

\begin{lemma}\label{lemma:skip}

Suppose that, at the end of phase $h$, $\bv{h}[\ell,m]$ is a monochrome
block. Then $(i)$ for $g>h$, $\bv{g}[\ell,m] = \bv{h}[\ell,m]$, and $(ii)$
processing $\bv{h}[\ell,m]$ during phase $h+1$ creates a set of monochrome
blocks in $\bv{h+1}$.

\end{lemma}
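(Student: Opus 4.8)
The plan is to reduce both parts to a single structural fact: since $\bv{h}[\ell,m]$ is a block, all its entries share a common length-$h$ context prefix, call it $s$, and monochromaticity says that every such context belongs to the \emph{same} string. Without loss of generality assume the block consists only of \zerob's. Then its entries are exactly the $\bwtz$-symbols whose contexts $\tz[\sa_0[i],\nz]$ are prefixed by $s$, and, crucially, \emph{no} suffix of $\tone$ is prefixed by $s$ (the all-\oneb\ case is symmetric, with the roles of $\tz$ and $\tone$ exchanged). These two observations are what I would exploit throughout.

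For part $(i)$ I would show that the index interval $[\ell,m]$ is the same at every later phase. By Property~\ref{prop:hblock} and the block decomposition~\eqref{eq:Zblocks}, for every $g\ge h$ the vector $\bv{g}$ lists the entries sorted by the length-$g$ prefix of their contexts, so all entries whose context has length-$h$ prefix $s$ stay grouped together and occupy one contiguous interval. The left endpoint of that interval is one plus the number of entries whose context has a length-$h$ prefix strictly smaller than $s$, and its length is the number of entries whose context is prefixed by $s$; both counts are intrinsic to $s$ and do not depend on $g$. Hence this interval equals $[\ell,m]$ for every $g\ge h$, and by monochromaticity all the entries it contains come from $\tz$. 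Therefore $\bv{g}[\ell,m]$ is the all-\zerob\ string $\zerox^{\,m-\ell+1}=\bv{h}[\ell,m]$, which is exactly the claim.

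For part $(ii)$ I would track where the bits of $\bv{h}[\ell,m]$ go during phase $h+1$. Each \zerob\ of the block is an entry $\bwtz[i]$ whose context $\tz[\sa_0[i],\nz]$ is prefixed by $s$; when it is read, the procedure of Figure~\ref{fig:HMlcp} places it according to its symbol $c=\bwtz[i]=\tz[\sa_0[i]-1]$, so that it becomes the entry whose context is $c\cdot\tz[\sa_0[i],\nz]$, with length-$(h+1)$ prefix $cs$. Since the entries with context prefixed by $cs$ are exactly those obtained by prepending some symbol $c$ to an entry with context prefixed by $s$, and the latter are precisely the entries of the block $\bv{h}[\ell,m]$, for each distinct preceding symbol $c$ these images fill the \emph{entire} length-$(h+1)$ block of contexts prefixed by $cs$. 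It remains to see that each such block is monochrome: a \oneb\ in it would be a suffix of $\tone$ prefixed by $cs$, and deleting its first symbol would give a suffix of $\tone$ prefixed by $s$, contradicting the fact that no suffix of $\tone$ is prefixed by $s$. Hence every $cs$-block is all-\zerob, so processing $\bv{h}[\ell,m]$ creates only monochrome blocks, one per preceding symbol $c$ occurring in it.

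The step I expect to require the most care is the bookkeeping linking physical positions in $\bv{h}$ to those in $\bv{h+1}$: the entries sitting in $[\ell,m]$ are \emph{moved out} by the $\LF$-style relabeling of phase $h+1$ (their symbol $c$ is prepended), while the entries that re-populate $[\ell,m]$ in $\bv{h+1}$ are the images of possibly different positions, so reconciling this motion with the interval-invariance used in part $(i)$ is the delicate point. I would also dispose of the boundary cases already isolated in the correctness proof—entries with $\sa_0[i]=1$, whose preceding symbol is $\eosz$ and which simply form monochrome singletons, and the degenerate phase $h=0$ where the single block is not monochrome unless one string is empty—and verify that the endpoints $B[\ell]$ and $B[m+1]$ of the newly created $cs$-blocks are marked as in Lemma~\ref{lemma:B}, so that the block structure is genuinely refined and not merely permuted.
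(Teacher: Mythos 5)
Your proof is correct, but it argues semantically where the paper argues operationally, and the two proofs justify the key facts differently. For part $(i)$ the paper's entire argument is one sentence: later phases only reorder entries within a block (possibly splitting it into sub-blocks), and a monochrome block is invariant under any internal reordering; your interval-invariance argument --- the positions occupied by the entries with common length-$h$ prefix $s$ are fixed by counts intrinsic to $s$, and monochromaticity forces the content to be constant --- is a first-principles justification of that one-liner, and it is also what resolves the ``delicate point'' you flag at the end, since it makes the physical motion of individual bits irrelevant. For part $(ii)$ the paper stays at the level of the scan in Figure~\ref{fig:HMlcp}: as $k$ runs over $[\ell,m]$ every bit written to $\bv{h+1}$ equals the block's single value, and a new block is created for each distinct symbol encountered, implicitly relying on the $B$/$\Bid$ machinery of Lemma~\ref{lemma:B} to separate these runs from same-symbol bits coming from other source blocks. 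You instead prove the stronger semantic statement that the destination run for symbol $c$ is \emph{exactly} the block of contexts prefixed by $cs$, and that it is monochrome because a \oneb\ in it would yield, after deleting its first symbol, a suffix of $\tone$ prefixed by $s$, contradicting the monochromaticity of the source block. Your route buys an explicit proof that the created runs are complete blocks and that their endpoints carry nonzero $B$ values (which indeed follows from Lemma~\ref{lemma:B}, since the boundaries of a maximal $cs$-run have $\lcpzo$ values below $h+1$), plus careful treatment of the terminator singletons and the degenerate $h=0$ case; the paper's route buys brevity and stays closer to the actual data movement. Since both rest on the same two pillars --- Property~\ref{prop:hblock} and the block decomposition~\eqref{eq:Zblocks} --- the difference is one of depth and framing rather than of strategy, and I see no gap in your version.
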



\begin{proof}

The first part of the Lemma follows from the observation that subsequent
phases of the algorithm will only reorder the values within a block (and
possibly create new sub-blocks); but if a block is monochrome the reordering
will not change its actual content.

For the second part, we observe that during phase $h+1$ as $k$ goes from
$\ell$ to $m$ the algorithm writes to $\bv{h+1}$ the same value which is in
$\bv{h}[\ell,m]$. Hence, a new monochrome block will be created for each
distinct symbol encountered (in $\bwtz$ or $\bwto$) as $k$ goes through the
range $[\ell,m]$.
\end{proof}

The lemma implies that, if block $\bv{h}[\ell,m]$ is monochrome at the end of
phase $h$, starting from phase $g=h+2$ processing the range $[\ell,m]$ will
not change $\bv{g}$ with respect to $\bv{g-1}$. Indeed, by the lemma the
monochrome blocks created in phase $h+1$  do not change in subsequent phases
(in a subsequent phase a monochrome block can be split in sub-blocks, but the
actual content of the bit vector does not change). The above observation
suggests that, after we have processed block $\bv{h+1}[\ell,m]$ in phase
$h+1$, we can mark it as {\em \useless} and avoid to process it again. As the
computation goes on, more and more blocks become \useless. Hence, in the
generic phase $h$ instead of processing the whole $\bv{h-1}$ we process only
the blocks which are still ``active'' and skip \useless\ blocks. Adjacent
\useless\ blocks are merged so that among two active blocks there is at most
one \useless\ block (the {\em gap} that gives the name to the algorithm). The
overall structure of a single phase is shown in Figure~\ref{fig:gap2}. The
algorithm terminates when there are no more active blocks since this implies
that all blocks have become monochrome and by Lemma~\ref{lemma:monochrome} we
are able to compute $\bwtzo$ and $\lcpzo$.

\begin{figure}
\hrule\smallbreak
\begin{algorithmic}[1]
\If{(next block is \useless)}
    \State{skip it}\label{line:skip}
\Else
    \State{process block}\label{line:process_block}
    \If{(processed block is monochrome)}
      \State mark it \useless
    \EndIf
\EndIf
\If{(last two blocks are \useless)}
    \State{merge them}
\EndIf
\end{algorithmic}
\smallbreak\hrule
\caption{Main loop of the \gap\ algorithm. The processing of active
blocks at Line~\ref{line:process_block} is done as in
Lines~\ref{line:block_process_start}--\ref{line:block_process_end}
of Figure~\ref{fig:HMlcp}.}\label{fig:gap2}
\end{figure}

We point out that at Line~\ref{line:skip} of the \gap\ algorithm we cannot
simply skip an \useless\ block ignoring its content. To keep the algorithm
consistent we must correctly update the global variables of the main loop,
i.e. the array $F$ and the pointers $k_0$ and $k_1$ in
Figure~\ref{fig:HMlcp}. To this end a simple approach is to store for each
(merged) \useless\ block the number of occurrences $o_c$ of each symbol
$c\in\A$ in it and the pair $(r_0,r_1)$ providing the number of \zerob's and
\oneb's in the block. When the algorithm reaches an \useless\ block, $F$,
$k_0$, $k_1$ are updated setting $k_0 \gets k_0 + r_0$, $k_1 \gets k_1 + r_1$
and $\forall c$ $F[c] \gets F[c] + o_c$.

The above scheme for handling \useless\ blocks is simple and probably
effective in most cases. However, using $\Oh(|\A|)$ time to skip an
\useless\ block is not competitive in terms of worst case complexity. A
better alternative is to build a wavelet tree for $\bwtz$ and $\bwto$ at the
beginning of the algorithm. Then, for each \useless\ block we store only the
the pair $(r_0,r_1)$. When we reach an \useless\ block we use such pair to
update $k_0$ and $k_1$. The array $F$ is not immediately updated: Instead we
maintain two global arrays $L_0[1,|\A|]$ and $L_1[1,|\A|]$ such that $L_0[c]$
and $L_1[c]$ store the value of $k_0$ and $k_1$ at the time the value $F[c]$
was last updated. At the {\em first} occurrence of a symbol $c$ inside an
active block we update $F[c]$ adding to it the number of occurrences of $c$
in $\bwtz[L_o[c]+1,k_0]$ and $\bwto[L_1[c]+1,k_1]$ that we compute in
$\Oh(\log|\A|)$ time using the wavelet trees. Using this lazy update
mechanism, handling \useless\ blocks adds a $\Oh(\min(\ell,|\A|) \log|\A|)$
additive slowdown to the cost of processing an active block of length~$\ell$.

\begin{theorem}
Given $\bwtz, \lcpz$ and $\bwto, \lcpo$ the \gap\ algorithm computes $\bwtzo$
and $\lcpzo$ in $\Oh(  \log(|\A|) (\nz + \none) \avelcp )$ time, where
$\avelcp=(\sum_i \lcpzo[i])/(\nz+\none)$ is the average LCP of the string
$\tzo$.

\end{theorem}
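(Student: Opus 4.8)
The plan is to charge the whole running time to the processing of \emph{active} (non-monochrome) blocks and then to bound that work by $\sum_i\lcpzo[i]$. First I would fix the per-phase cost. Let $N_h$ denote the total length of the active blocks of $\bv{h-1}$ processed during phase $h$. Processing one active block of length $\ell$ costs $\Oh(\ell)$ for the main loop plus $\Oh(\min(\ell,|\A|)\log|\A|)$ for the lazy first-occurrence updates of $F$ through the wavelet trees; since $\sum_b\min(\ell_b,|\A|)\le\sum_b\ell_b=N_h$, one phase spends $\Oh(N_h\log|\A|)$ there. Skipping or merging a useless block costs $\Oh(1)$ (we only add $r_0,r_1$ to $k_0,k_1$), and there is at most one more useless block than active blocks, i.e. $\Oh(N_h)$ of them. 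The one-time construction of the two wavelet trees is $\Oh((\nz+\none)\log|\A|)$ and is absorbed. Hence the total time is $\Oh\!\big(\log|\A|\sum_{h\ge1}N_h\big)$, and everything reduces to bounding $\sum_hN_h$.

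The crux is an identity relating $N_h$ to LCP values. Applying Lemma~\ref{lemma:B} at level $h-1$, every block $\bv{h-1}[\ell,m]$ satisfies $\lcpzo[\ell]<h-1$, $\lcpzo[m+1]<h-1$ and $\lcpzo[i]\ge h-1$ for each internal $i\in[\ell+1,m]$; so its length is $1$ plus the number of its internal positions $i$ with $\lcpzo[i]\ge h-1$. Moreover, by Corollary~\ref{cor:lcp} the positions that are internal (non-boundary) in $\bv{h-1}$ are exactly those with $\lcpzo[i]\ge h-1$. Summing over the active blocks of phase $h$,
\begin{equation*}
N_h \;=\; (\text{number of active blocks}) \;+\; \big|\{\,i:\lcpzo[i]\ge h-1\text{ and }i\text{ lies in an active block}\,\}\big| \;\le\; (\text{number of active blocks}) + \big|\{i:\lcpzo[i]\ge h-1\}\big|.
\end{equation*}
Because a non-monochrome block contains at least one $\zerob$ and one $\oneb$, each active block has length $\ge2$, so the number of active blocks in phase $h$ is at most $N_h/2$. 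Substituting and rearranging gives $N_h\le 2\,\big|\{i:\lcpzo[i]\ge h-1\}\big|$.

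It remains to sum over phases. Exchanging the order of summation,
\begin{equation*}
\sum_{h\ge1}N_h \;\le\; 2\sum_{h\ge1}\big|\{i:\lcpzo[i]\ge h-1\}\big| \;=\; 2\sum_i\big|\{h\ge1:h-1\le\lcpzo[i]\}\big| \;=\; 2\!\!\sum_{i:\lcpzo[i]\ge0}\!\!(\lcpzo[i]+1).
\end{equation*}
As only the two sentinel entries equal $-1$, the right-hand side is $\Oh\big(\sum_i\lcpzo[i]+(\nz+\none)\big)=\Oh\big((\nz+\none)(\avelcp+1)\big)$. Multiplying by the $\Oh(\log|\A|)$ factor yields the claimed $\Oh\big(\log|\A|\,(\nz+\none)\,\avelcp\big)$, the additive linear term being the unavoidable cost of the single full pass over $\bv{0}$. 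Correctness of the final $\bwtzo$ and $\lcpzo$ then follows from Lemma~\ref{lemma:monochrome} (once all blocks are monochrome the answer is read off directly) together with Lemma~\ref{lemma:skip} (skipped blocks never need reprocessing).

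The hard part, I expect, is the counting identity for $N_h$: one must argue that the only part of a phase's cost that does not telescope away is attributable to internal block boundaries — positions $i$ with $\lcpzo[i]\ge h-1$ — and then fold the residual ``one per active block'' term back into $N_h$ via the length-$\ge2$ property. Keeping the phase offsets exactly consistent (the shift between ``end of phase $h-1$'' and ``processed during phase $h$'', and the role of Corollary~\ref{cor:lcp} in identifying internal positions) is where the argument is most delicate.
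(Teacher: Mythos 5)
Your proposal is correct and follows the same skeleton as the paper's proof: charge $\Oh(\ell\log|\A|)$ per active block of length $\ell$ via the lazy wavelet-tree update mechanism, and bound the total length of active blocks summed over all phases by $\Oh(\sum_i\lcpzo[i])$. The difference lies in the second step: the paper disposes of it in one line by appealing to the analysis of Holt and McMillan (``we reason as in~\cite{bcb/HoltM14}''), whereas you prove it from scratch, using Lemma~\ref{lemma:B} and Corollary~\ref{cor:lcp} to identify the internal positions of the blocks entering phase $h$ with $\{i:\lcpzo[i]\ge h-1\}$, folding the one-per-block residual back into $N_h$ via the length-$\ge 2$ property of non-monochrome blocks, and then exchanging the order of summation to get $\sum_h N_h\le 2\sum_i(\lcpzo[i]+1)$. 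This makes the theorem self-contained, and your explicit accounting of the additive $\Oh((\nz+\none)\log|\A|)$ term is more honest than the stated bound (which, as in \cite{bcb/HoltM14}, silently absorbs it even though $\avelcp$ can be smaller than $1$). One bookkeeping point needs a patch: the blocks processed in phase $h$ are those not yet marked \useless, and by Figure~\ref{fig:gap2} a block is detected to be monochrome only \emph{while} being processed, so the processed blocks include already-monochrome blocks undergoing their final pass, and these can have length $1$ --- your claim that every processed block has length $\ge 2$ fails for them, so the step $N_h\le 2\,|\{i:\lcpzo[i]\ge h-1\}|$ does not literally hold as stated. The fix is cheap: once such a block is processed it is marked \useless\ and never touched again, and since blocks only refine, each position of $\bv{h}$ participates in at most one such final pass over the entire run; hence these passes contribute $\Oh((\nz+\none)\log|\A|)$ in total, which the additive linear term you already carry absorbs. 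With that amendment your argument is complete and, arguably, a strengthening of the paper's presentation.
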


\begin{proof}
The correctness follows from the above discussion. For the analysis of the
running time we reason as in~\cite{bcb/HoltM14} and observe that the sum,
over all phases, of the length of all active blocks is bounded by $\Oh(\sum_i
\lcpzo[i]) = \Oh((\nz+\none)\avelcp)$. In any phase, using the lazy update
mechanism,  the cost of processing an active block of length $\ell$ is
bounded by $\Oh(\ell \log(|\A|)$ and final time bound follows.
\end{proof}

We point out that our \gap\ algorithm is related to the version of the \hm\
algorithm described in~\cite[Sect.~2.1]{bcb/HoltM14}: Indeed, the sorting
operations are essentially the same in the two algorithms. The main
difference is that \gap\ keeps explicit track of the \useless\ blocks while
\hm\ keeps explicit track of the active blocks (called buckets
in~\cite{bcb/HoltM14}): this difference makes the non-sorting operations
completely different. An advantage of working with \useless\ blocks is that
they can be easily merged, while this is not the case for the active blocks
in \hm. Of course, the main difference is that \gap\ computes simultaneously
$\bwtzo$ and $\lcpzo$ while \hm\ only computes $\bwtzo$.

\end{document}